\setlist{nolistsep}
\newcommand{\sz}{\scriptstyle }
\newtheorem{definition}{Definition}
\providecommand{\keywords}[1]{\small\textbf{{Keywords---}}#1}
\title{Spatio-Temporal Top-k Similarity Search for Trajectories in Graphs}
\author{
Anne Driemel\\
Petra Mutzel\\
Lutz Oettershagen
}
\author{
    Lutz Oettershagen\thanks{Institute of Computer Science, University of Bonn, Germany, \{lutz.oettershagen,petra.mutzel\}@cs.uni-bonn.de}
    \and Anne Driemel\thanks{Hausdorff Center for Mathematics, University of Bonn, Germany, driemel@cs.uni-bonn.de}
    \and Petra Mutzel\footnotemark[1]
}
\begin{document}
\date{}
\maketitle

\begin{abstract}
We study the problem of finding the $k$ most similar trajectories to a given query trajectory.
Our work is inspired by the work of Grossi et al.~\cite{grossi2020finding} that considers trajectories as
walks in a graph. Each visited vertex is accompanied by a time-interval.
Grossi et al.~define a similarity function that captures temporal and spatial aspects.
We improve this similarity function to derive a new spatio-temporal distance function for which we can show that a specific type of triangle inequality is satisfied. 
This distance function is the basis for our index structures, which can be constructed efficiently, need only linear memory, and can quickly answer queries for the top-$k$ most similar trajectories. 
Our evaluation on real-world and synthetic data sets shows that our algorithms outperform the baselines with respect to indexing time by several orders of magnitude while achieving similar or better query time and quality of results.
\\\keywords{Trajectories, Indexing, Top-$k$ Query}
\end{abstract}

\section{Introduction}
More and more trajectory data is collected due to the ubiquitous availability of sensors and personal mobile devices that allow tracking of movement over time. 
Therefore, trajectory data mining is attracting increasing attention in the scientific literature~\cite{AgrawalFSS93,ChenOO05,ChenSYZ19,ChenSZ+10,HwangKL06,ShangDZJKZ14,tiakas2006trajectory,TiakasR15,XiaWZ+2011}. For any fundamental task in trajectory data mining, the choice of similarity measure is a crucial step in the design process.
Often there are spatial restrictions to the movement and the trajectories of interest are related to a graph, or are mapped to a spatial network.   
We are interested in similarity, which takes such spatial as well as \emph{temporal} aspects into account.
We consider two trajectories as similar if they visit the same or proximate vertices during the same periods of time. 
Our work is inspired by Grossi et al.~\cite{grossi2020finding}, who define a similarity function for two trajectories in a graph. The trajectories can be of different length and the similarity function takes spatial and temporal similarity into account. It can be computed in linear time with respect to the length of the trajectories.
We consider trajectories as sequence of vertices in a graph and for each visited vertex there is a discrete time-interval for the time the trajectory stays at the vertex.
See \Cref{fig:example1v2} for an example.
Based on Grossi et al.~\cite{grossi2020finding}, we introduce an improved and new spatio-temporal similarity and a corresponding distance function for trajectories in graphs. We show that a specific kind of triangle-inequality holds for the distance function under reasonable assumptions.
This distance function provides the basis for new index data structures that allow efficient top-$k$ similarity queries.
A top-$k$ trajectory query $(Q,s)$ specifies a trajectory $Q$ and a time interval $s$. Given a set of trajectories $\mathcal{T}$, 
the result of a top-$k$ query consists of the subset of $\mathcal{T}$ containing all trajectories that have one of the $k$ highest similarities to $Q$ with respect to the time interval $s$. 
These queries have important real-life applications:
\begin{figure}
    \centering
    \includegraphics[width=0.79\linewidth]{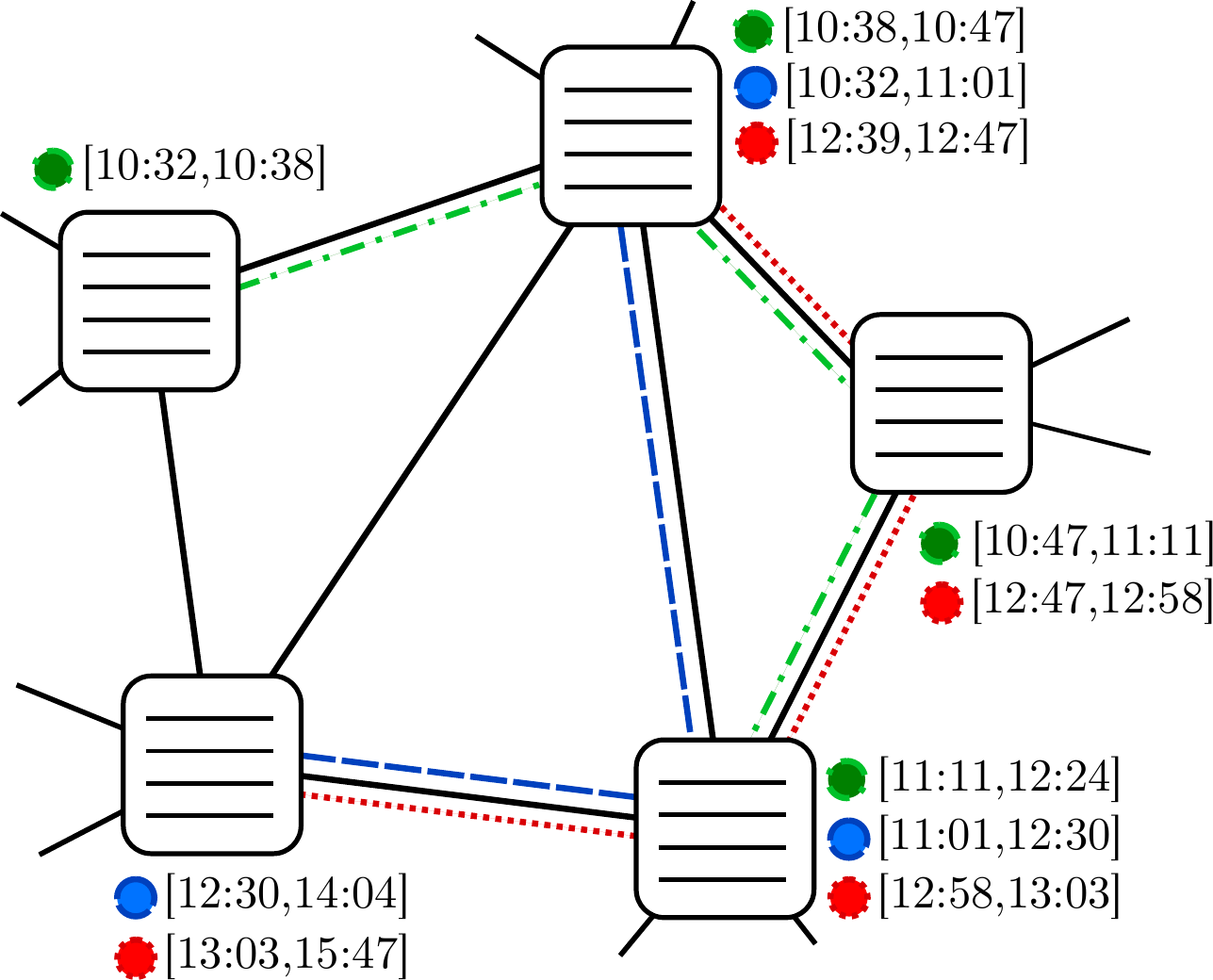}
    \caption{Example for trajectories with time intervals in a network, e.g., an online social network. The trajectories reveal the user behavior, i.e., the times a user visits and leaves a website.}
    \label{fig:example1v2}
\end{figure}

\begin{itemize}
    \item \textbf{Web analytics:} Users of a online social network or web community following links and visiting user pages. 
    The goal is to find similar browsing behavior. \Cref{fig:example1v2} shows an example.  
    \item \textbf{Travel recommendation:} Tourism is one of the largest industries and the emergence of travel focused social networks enables users to share their tours.
    The locations are points-of-interests (POI), and the intervals are the duration person stays at a POI. A query is a request for a recommendation.
    \item \textbf{Animal behavior:} Consider wildlife that is tracked using GPS. The living space of the animal is divided into zones. The goal is to identify similarities in animal behaviors. 
    Vertices represent either specific locations like waterhole or feeding place or territories of animals. 
    \item \textbf{Traffic and crowd analysis:} The goal is to identify person or vehicle flows at specific times through predefined areas.
    Vertices represent these areas. This also includes the application in contact tracing, where we need to determine contacts of an, e.g. infected individual, to other persons. 
\end{itemize}
These applications have in common that one is interested in finding a set of the most similar trajectories to a given one.
This is a fundamental problem in trajectory mining like clustering, outlier detection, classification, or prediction tasks.
It is necessary to select or define an adequate similarity measure or distance function, respectively, that fits the requirements of the application.

\vspace{2mm}
\noindent\textbf{Contributions:}
\begin{enumerate}\itemsep0em
    \item We introduce a spatio-temporal similarity function and show that the triangle-inequality holds under certain conditions for the corresponding distance function. The similarity computation for two trajectories only needs linear time with respect to the length of the longer trajectory.
    \item We design indices that can be constructed very efficiently and use linear memory with respect to the number of trajectories. The indices are based on spatial as well on temporal filtering and allow heuristic top-$k$ similarity queries with short running times and high quality of the results. 
    Additionally, we apply upper bounding, which allows a direct, highly efficient query even without the need for a preprocessed index data structure. In the latter case, the output is exact.
    \item We evaluate our new algorithms on real-world and synthetic data sets. Our new solutions outperform the baselines (including \cite{grossi2020finding}) with respect to indexing time by several orders of magnitude. Moreover, our query times are substantially faster, and the quality of the results is better than or on par with the baselines algorithms.  
\end{enumerate}
\subsection{Related Work}
Since trajectory similarity is of high interest for many data analytics tasks, many different similarity measures 
have been used, e.g., based on dynamic time warping, Euclidean distances, or edit distances. Su et al.~\cite{distancesurvey} provides a nice overview.
For trajectory analysis in networks, many approaches have concentrated on the spatial similarity only, and a few consider spatio-temporal similarity.
Hwang et al.~\cite{HwangKL06} have suggested a similarity measure based on the network distance
measuring spatial and temporal similarity. However, a set of nodes need to be selected in advance, and spatial similarity then means passing through the same nodes simultaneously.
Xia et al.~\cite{XiaWZ+2011} use a similarity measure for network constrained trajectories based on an extension of the Jaccard similarity.
As a similarity measure, they use the product of spatial and temporal similarity.
Tiakas et al.~\cite{tiakas2006trajectory,TiakasR15} suggest a weighted sum of spatial and temporal similarity. 
Their similarity function works for two trajectories with the same length and can be computed in linear time with respect to the length of the given trajectories.
Shang et al.~\cite{simjoin} also use a weighted sum of spatial and temporal similarity for similarity-joins of trajectories.

Another way to approach the problem is to use a distance measure based on the discrete Fr\'echet distance, or dynamic time warping, which optimize over all vertex-mappings between the two trajectories that respect the time-ordering, where the underlying metric would be derived from the shortest-path metric given by the graph. Near-neighbor data structures have been studied theoretically with specific conditions on the underlying graph and the length of the queries, see \cite{driemelsublinear2019, Indyk02}.

Our work is inspired by Grossi et al.~\cite{grossi2020finding}. They suggest a spatio-temporal similarity measure for two trajectories in a graph. The trajectories can be of differend length and if the pairwise distances are given, then the measure can be computed in linear time with respect to the length of the trajectories. The authors also suggest an algorithm for answering the top-$k$ trajectory query problem. For speeding up the computations, they suggest an indexing method based on interval trees and a method to approximate their similarity measure.
We provide a more detailed description of their work and a comparison to our approach in \Cref{sec:grossi}.
\section{Preliminaries}
An \emph{undirected} and weighted \emph{graph} $G=(V, E, c)$ consists of a finite set of vertices $V$, a finite set $E\subseteq\{\{u,v\}\subseteq V\mid u\neq v\}$ of undirected edges and a cost function $c:E\rightarrow \mathbb{R}_{>0}$ that assigns a positive cost to each edge $e\in E$.
A \emph{walk} in $G$ is an alternating sequence of vertices and edges connecting consecutive vertices. 
A \emph{path} is a walk that visits each vertex at most once. The cost of a walk or path is the sum of its edge costs.
Let $d(u,v)$ denote the shortest path distance between $u,v\in V$.

\begin{definition}[Trajectory]
Let $G=(V,E,c)$ be an undirected, weighted and connected graph. A \emph{trajectory} $T$ is a sequence of pairs $((v_1,t_1),\ldots,(v_\ell,t_\ell))$, such that for $1\leq i \leq \ell$ the pair $(v_i, t_i)$ consists of $v_i\in V$ and a discrete time interval $t_i=[a_i,b_i]$ with $a_i,b_i\in \mathbb{Z}$, $a_i< b_i$ and $a_{i+1}=b_i$ for $1\leq i<\ell$.
\end{definition}
The \emph{starting time} of $T$ is $T.start=a_1$ and the \emph{ending time} $T.end=b_\ell$.
We denote with $\mathcal{I}(T)$ the total interval in which trajectory $T$ exists, i.e., from $T.start$ to $T.end$. %
For a trajectory $T$ and a time interval $t=[a,b]$ we define $T[t]$ as the \emph{time-restricted} trajectory that is intersected with $t$, i.e., $T[t]=((v_i,t'_i),\ldots,(v_j,t'_j))$ with $v_i$ ($v_j$) being the first (last) vertex of $T$ such that for $t_i=[a_i,b_i]$ it holds that $b_i>a$ (and for $t_j=[a_j,b_j]$ $a_j<b$, resp.), $t'_i=\max\{t_i,a\}$ and $t'_j=\min\{t_j,b\}$.
We assume for $T=((v_1,t_1),\ldots,(v_\ell,t_\ell))$ that $v_i\neq v_{i+1}$ for all $1\leq i <\ell$.
We say trajectory $T$ \emph{intersects} a time interval $t$ if there is a $(v_i, t_i)\in T$ with $t_i\cap t\neq \emptyset$.
\section{Spatio-Temporal Similarity}
We define our new similarity function for trajectories on networks. %
The goal of is to capture both temporal and spatial aspects, such that if two trajectories are often in close proximity, i.e., visiting vertices that are close to each other during the same period of time, then the similarity should be high.
\begin{definition}\label{def:sim}
Let $T=((v_1,t_1),\ldots,(v_\ell,t_\ell))$ and $Q=((u_1,s_1),\ldots,(u_k,s_k))$ be two trajectories, and $s$ a time interval. We define the similarity of $Q$ and $T$ in the time interval $s$ as
\begin{align*}
Sim(Q,T,s)=\frac{1}{|s|}\cdot\sum_{\substack{(v_i,t_i)\in T\\(u_j,s_j)\in Q}}{|s\cap t_i\cap s_j|\cdot e^{-d(v_i,u_j)}}\text{.}
\end{align*}
\end{definition}
Notice that for two trajectories $T$ and $Q$, and a time interval $s$, it holds that $0\leq Sim(Q,T,s) \leq 1$.  $Sim(Q,T,s)$ is minimal if the common intersection of the time intervals is empty. In this case $Sim(Q,T,s)=0$. 
\begin{lemma}\label{lemma:props1}
    Let $Q$ and $T$ be trajectories and $s$ a time interval with $s\subseteq \mathcal{I}(Q)$. It holds that
    \begin{enumerate}
        \item $Sim(Q,T,s) = Sim(T,Q,s)$, and
        \item $Q[s]=T[s]$ if and only if $Sim(Q,T,s) = 1$.
    \end{enumerate}
\end{lemma}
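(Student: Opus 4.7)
The first claim is immediate from the definition: the index set of pairs $((v_i,t_i),(u_j,s_j))$ is unchanged under swapping $Q$ and $T$, and each summand $|s\cap t_i\cap s_j|\cdot e^{-d(v_i,u_j)}$ is symmetric in its two arguments because $\cap$ is commutative and the shortest-path distance satisfies $d(v_i,u_j)=d(u_j,v_i)$. So the first part reduces to routine bookkeeping.

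For the forward direction of the second claim, my plan is to use $Q[s]=T[s]$ to pin down a common vertex on every overlapping pair of intervals. Concretely, whenever $|s\cap t_i\cap s_j|>0$, both $T[s]$ and $Q[s]$ occupy a single vertex at those times; since the two restricted trajectories coincide, $v_i=u_j$, so $e^{-d(v_i,u_j)}=1$ on the support of the sum. The sum therefore collapses to $\sum_{i,j}|s\cap t_i\cap s_j|$. Since the intervals $s_j$ tile $\mathcal{I}(Q)\supseteq s$ and the intervals $t_i$ tile $\mathcal{I}(T)\supseteq s$ (the latter inclusion follows from $Q[s]=T[s]$ covering all of $s$), this double sum telescopes to $|s|$, yielding $Sim(Q,T,s)=1$.

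For the reverse direction I would chain two upper bounds. The inequality $e^{-d(v_i,u_j)}\le 1$ gives $|s|\cdot Sim(Q,T,s)\le \sum_{i,j}|s\cap t_i\cap s_j|$. Using $s\subseteq\mathcal{I}(Q)$ and the fact that the $s_j$ partition $\mathcal{I}(Q)$, the inner sum over $j$ equals $|s\cap t_i|$, and summing over $i$ is bounded by $|s\cap\mathcal{I}(T)|\le|s|$. Hence $Sim(Q,T,s)\le 1$ unconditionally, and the hypothesis $Sim(Q,T,s)=1$ forces both inequalities to be tight. Tightness in the second gives $s\subseteq\mathcal{I}(T)$, so the $t_i$ also tile $s$. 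Tightness in the first, combined with $e^{-x}=1\Leftrightarrow x=0$ and $d(u,v)=0\Leftrightarrow u=v$, forces $v_i=u_j$ for every pair with $|s\cap t_i\cap s_j|>0$. Together with the paper's convention $v_i\neq v_{i+1}$, the vertex-valued functions $\tau\mapsto v(\tau)$ on $s$ induced by $T$ and $Q$ then agree pointwise, and each uniquely determines its restricted trajectory, so $Q[s]=T[s]$.

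The main obstacle I expect is the interval-tiling bookkeeping rather than anything algebraic. One has to argue carefully, under the convention $a_{i+1}=b_i$ and the clipping in the definition of $T[s]$, both that $\sum_{i,j}|s\cap t_i\cap s_j|=|s|$ without double counting at shared endpoints and that $Q[s]=T[s]$ with $s\subseteq\mathcal{I}(Q)$ really implies $s\subseteq\mathcal{I}(T)$. Once those technicalities are pinned down, the proof is a direct application of $e^{-x}\le 1$ with equality iff $x=0$.
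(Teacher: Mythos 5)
Your proposal is correct and follows essentially the same route as the paper: part (1) via symmetry of the shortest-path metric and commutativity of intersection, and part (2) via $e^{-d(u,v)}\le 1$ with equality iff $u=v$ (strictly positive edge weights) combined with the fact that the overlapping interval lengths sum to at most $|s|$, with equality iff $s\subseteq\mathcal{I}(T)$. The only difference is presentational: you derive the reverse direction by forcing tightness in a chain of inequalities, whereas the paper argues by contradiction with a case split on whether the restricted trajectories differ in vertices or in times; the ingredients, including the use of the convention $v_i\neq v_{i+1}$, are identical, and your explicit flagging of the interval-tiling bookkeeping at shared endpoints is a point the paper glosses over.
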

\begin{proof}
    (1.) The shortest path metric is symmetric, i.e., $d(u,v)=d(v,u)$ for all $u,v\in V$. The summation is over the same pairs of $(v_i,t_i)\in T$ and $(u_j,s_j)\in Q$, and the intersection of the intervals is commutative. Therefore, the result follows.
    
    (2.) $\Rightarrow:$ Notice that if $Q[s]=T[s]$ in each step of the summation $e^{-d(u,v)}=1$. Because $s\subseteq \mathcal{I}(Q)$, the result of the summation is $s$ and normalization is $1$. 
    \\$\Leftarrow:$ %
    Assume that $Sim(Q,T,s) = 1$ but $Q[s]\neq T[s]$, i.e., $Q[s]$ and $T[s]$ differ in the vertices they visit or the times when they visit them. In the first case, due to the strictly positive edge weights, there is a vertex pair such that $e^{-d(u,v)}<1$, however for all other vertex pairs the value $e^{-d(u',v')}$ is at most $1$. Because the intervals are intersected with the interval $s$ the total sum will be less than $|s|$ and leads to a contradiction to the assumption. Analogously, in the case that $\mathcal{I}(T[s])<|s|$ a contradiction follows.
    Now, the case that $Q[s]$ and $T[s]$ differ in the times when they visit the vertices. 
    Because of the assumption that a trajectory does not stay at the same vertex in two consecutive time intervals, there is an intersection of time intervals in which $Q[s]$ and $T[s]$ visit different vertices $u$ and $v$. Due to the strictly positive edge weights it is $e^{-d(u,v)}<1$. This leads again to a contradiction.
\end{proof}
For the computation of the similarity, the shortest-path distances between the vertices of the graph is needed. These distances can be precomputed for all vertices or computed on-the-fly for vertices $u$ that are visited by the query trajectory $Q$. %
\begin{theorem}\label{theorem:runningtime}
    Let $Q$ and $T$ be trajectories, and $s$ a time interval, the computation of the similarity $Sim(Q,T,s)$ takes $\mathcal{O}(|Q|+|T|)$ time, if the shortest path distance $d(u,v)$ between $u,v\in V$ can be obtained in constant time.
\end{theorem}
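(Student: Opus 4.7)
The plan is to show that the double sum in Definition 1 is actually sparse: only $O(|Q|+|T|)$ of the pairs $(t_i,s_j)$ satisfy $t_i \cap s_j \neq \emptyset$, and these can be enumerated in order by a two-pointer sweep.

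First I would reduce to the time-restricted trajectories $Q[s]$ and $T[s]$. Since the intervals of $Q$ (resp.\ $T$) are consecutive and adjacent ($a_{j+1}=b_j$), the first and last indices whose intervals intersect $s$ can be located by a single linear scan, and then the endpoints $t'_i,t'_j$ in the definition of $Q[s],T[s]$ are clipped to $s$. This preprocessing is $O(|Q|+|T|)$ and afterwards any pair with $s \cap t_i \cap s_j \neq \emptyset$ appears in $Q[s]$ and $T[s]$, so we may drop the factor $s$ and work with the two sequences of disjoint, consecutive intervals that partition the same time span $\mathcal{I}(Q[s]) \cap \mathcal{I}(T[s])$.

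Next I would do a merge-style sweep with two pointers $i,j$, initialized at the first intervals of $T[s]$ and $Q[s]$. At each step, both $t_i$ and $s_j$ share a nonempty intersection whose length is $\min(b_i,b_j')-\max(a_i,a_j')$, computable in constant time from the four endpoints; by hypothesis $d(v_i,u_j)$ is available in constant time, so the contribution $|t_i\cap s_j|\cdot e^{-d(v_i,u_j)}$ is computed in $O(1)$ and added to an accumulator. Then I advance whichever pointer corresponds to the interval that ends first (breaking ties by advancing both), because the next pair with nonempty intersection must involve the next interval on that side. Because each advance strictly moves one pointer forward in $T[s]$ or $Q[s]$, the total number of iterations is at most $|T[s]|+|Q[s]|-1 = O(|Q|+|T|)$.

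Finally I would argue correctness: every pair $(i,j)$ with $s\cap t_i\cap s_j\neq\emptyset$ is visited exactly once, because the intervals on each side are pairwise disjoint and their union covers the same time span, so fixing a point in the overlap determines a unique pair $(i,j)$ and the sweep visits consecutive maximal sub-intervals of this overlap. Pairs with empty intersection contribute $0$ and can be ignored. Dividing the accumulated sum by $|s|$ at the end yields $Sim(Q,T,s)$ in total time $O(|Q|+|T|)$. The only subtle step is the clipping to $s$, which must be handled so that $t'_i$ and $t'_j$ produced at the boundary of $s$ are used in the first and last iterations of the sweep; this is a routine bookkeeping detail rather than a real obstacle.
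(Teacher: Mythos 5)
Your proposal is correct and takes essentially the same approach as the paper: a two-pointer merge-style sweep over the interval sequences of $Q$ and $T$, advancing the pointer whose interval ends first, so that each of the $O(|Q|+|T|)$ pairs with non-empty intersection is handled in constant time. Your write-up is somewhat more explicit about the clipping to $s$ and the correctness of the sweep, but the underlying argument is identical.
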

\begin{proof}
    Consider the query trajectory $Q=((u_1,s_1),\ldots,(u_i,[a_i,b_i]),\ldots,(u_k,s_k))$ and the trajectory $T=((v_1,t_1),\ldots,(v_j,[c_j,d_j]),\ldots,(v_\ell,t_\ell))$. 
    We start the computation with $i=1$ and $j=1$, and $|s\cap t_1\cap s_1|$ is either zero or larger than zero. In the first case we can increase both $i$ and $j$. In the second case, we increase $i$ if $b_i<d_j$ or $j=\ell$, and we increase $j$ if $b_i>d_j$ or $i=k$. We repeat this for maximal $|Q|+|T|$ times and find all pairs $(u_i,s_i)$ and $(v_j,t_j)$ that have non-empty intersection.
\end{proof}
\Cref{def:sim} is similar to the similarity function defined in~\cite{grossi2020finding}, however our improvements allow to prove useful properties for the corresponding distance function.
We now define the distance function based on the similarity and show a specific type of triangle inequality.
\begin{definition}\label{def:distance}
    Let $Q$ and $T$ be trajectories, and $s$ a time interval.
    We define the distance $Dist(Q,T,s)=1-Sim(Q,T,s)$.
\end{definition}
\begin{lemma}\label{lemma:triangle:ineq}
    Let $Q$, $T$ and $R$ be trajectories, and $s$ a time interval. If $s\subseteq \mathcal{I}(Q)$,
    then $Dist(Q,T,s)\leq Dist(Q,R,s)+Dist(R,T,s)$.
\end{lemma}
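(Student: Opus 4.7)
The plan is to reduce the claimed triangle inequality to a pointwise bound at each time instant $\tau \in s$, combining the metric triangle inequality for shortest-path distances with a short algebraic manipulation of $e^{-x}$.

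First, I would rewrite $Sim$ in a time-indexed form. For each $\tau \in s$, let $q(\tau)$, $r(\tau)$, $t(\tau)$ denote the (at most one) vertex that $Q$, $R$, $T$ visit at time $\tau$, leaving the value undefined when $\tau$ lies outside $\mathcal{I}(Q)$, $\mathcal{I}(R)$, or $\mathcal{I}(T)$ respectively. Set $f(u,v) = e^{-d(u,v)}$ when both arguments are defined and $f(u,v) = 0$ otherwise. Grouping the time axis according to which consecutive intervals of $Q$ and $T$ a point $\tau$ falls into, the sum in \Cref{def:sim} can be written as
\[
Sim(Q,T,s) \;=\; \frac{1}{|s|} \int_{s} f(q(\tau), t(\tau)) \, d\tau,
\]
and similarly for $Sim(Q,R,s)$ and $Sim(R,T,s)$.

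Second, after substituting $Dist = 1 - Sim$ the claim is equivalent to $Sim(Q,R,s) + Sim(R,T,s) \leq 1 + Sim(Q,T,s)$. By linearity of the integral it suffices to establish the pointwise bound
\[
f(q(\tau), r(\tau)) + f(r(\tau), t(\tau)) \;\leq\; 1 + f(q(\tau), t(\tau))
\]
for every $\tau \in s$ and then integrate and divide by $|s|$.

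Third, when all three of $q(\tau), r(\tau), t(\tau)$ are defined, the metric triangle inequality $d(q,t) \leq d(q,r) + d(r,t)$ together with $e^{-(a+b)} = e^{-a} e^{-b}$ yields $f(q,t) \geq f(q,r) \cdot f(r,t)$. Combining this with the elementary inequality $a + b \leq 1 + ab$ for $a, b \in [0,1]$ (which is a rearrangement of $(1-a)(1-b) \geq 0$) gives $f(q,r) + f(r,t) \leq 1 + f(q,r) f(r,t) \leq 1 + f(q,t)$, as desired.

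I expect the main obstacle to be the case analysis when $r(\tau)$ or $t(\tau)$ is undefined, since the hypothesis $s \subseteq \mathcal{I}(Q)$ only guarantees that $q(\tau)$ exists. In each such case at least one of the two left-hand terms vanishes, and the pointwise inequality reduces either to $f \leq 1$ (which holds since $d \geq 0$) or to $0 \leq 1 + f$ (trivially true). Hence the pointwise bound holds uniformly across $s$, and integrating yields the lemma.
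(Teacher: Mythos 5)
Your proof is correct and follows essentially the same route as the paper: the paper also reduces the claim to $1+Sim(Q,T,s)\geq Sim(Q,R,s)+Sim(R,T,s)$, decomposes the sums over unit-length subintervals of $s$ (organized as multisets of vertex pairs), and settles the generic case via the pointwise inequality $1+e^{-d(u,v)}\geq e^{-d(u,w)}+e^{-d(w,v)}$, with the remaining cases handled exactly by your ``undefined vertex'' analysis. Your write-up is, if anything, slightly more explicit, since you actually derive the key inequality from $d(u,v)\leq d(u,w)+d(w,v)$ and $a+b\leq 1+ab$ for $a,b\in[0,1]$, which the paper asserts without proof.
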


\begin{proof}
    Let $t = \mathcal{I}(T)$ and $r = \mathcal{I}(R)$.
    We can assume without loss of generality that $s = \mathcal{I}(Q)$.
    We show that $1-Sim(Q,T,s) \leq 1-Sim(Q,R,s) + 1-Sim(R,T,s)$. This is equivalent to
    $ 1+Sim(Q,T,s) \geq Sim(Q,R,s)    + Sim(R,T,s)$.
    By substituting \Cref{def:sim} and using the fact that
    \begin{align*}
     |s \cap t| = \sum_{\substack{(u_j,s_j)\in Q\\(v_i,t_i)\in T}} |s \cap s_j \cap t_i |,
    \end{align*}
    we can rewrite the above equivalently as
    \begin{align*}
    &|s|-|s \cap t|+\sum_{\substack{(u_j,s_j)\in Q\\(v_i,t_i)\in T}}{|s\cap s_j\cap t_i|\cdot (1+e^{-d(u_j,v_i)})} \geq \\&\hspace{2.3cm}\sum_{\substack{(u_j,s_j)\in Q\\(w_k,r_k)\in R}}{|s\cap s_j\cap r_k|\cdot e^{-d(u_j,w_k)}} \\&\hspace{2.2cm}+ \sum_{\substack{(w_k,r_k)\in R\\(v_i,t_i)\in T}}{|s\cap r_k\cap t_i|\cdot e^{-d(w_k,v_i)}}\text{.}
    \end{align*}
    We now want to show that the above inequality holds.
    Consider the following multisets of vertex pairs.
    $A$ contains the pairs $(u_j,v_i)$ that are summed up on the left side of the inequality for which $|s\cap s_j\cap t_i|>0$, where each $(u_j,v_i)$ is in $A$ exactly $|s\cap s_j\cap t_i|$ times.
    Similarly, $B$ contains the pairs $(u_j,w_k)$ that are summed up during the first summation on the right-hand side of the inequality for which $|s\cap s_j\cap r_k|>0$, where each $(u_j,w_k)$ is in $B$ exactly $|s\cap s_j\cap r_k|$ times.
    And finally, $C$ contains the pairs $(w_k,v_i)$ that are summed up during the second summation on the right-hand side of the inequality for which $|s\cap r_k\cap t_i|>0$, where each $(w_k,v_i)$ is in $C$ exactly $|s\cap r_k\cap t_i|$ times.
    Then, we show 
    \begin{align}\label{eq:multiset}
        &|s|-|s \cap t|+\sum_{(u_j,v_i)\in A} (1+e^{-d(u_j,v_i)}) \geq \nonumber\\&\hspace{1cm}\sum_{(u_j,w_k)\in B}e^{-d(u_j,w_k)}+ \sum_{(w_k,v_i)\in C}e^{-d(w_k,v_i)}\text{.}
    \end{align}
    We show that the multisets $A$, $B$ and $C$ contain vertex pairs such that the inequality holds.
    And let $p\subseteq s$ be an interval of length one. For each possible $p$ we may have some vertex pairs in the multisets.  

    We need the consider the following cases: %
    \begin{enumerate}\itemsep0em
         \item $p\cap t \neq \emptyset$ and $p\cap r = \emptyset$: $A$ contains vertex pairs $(u,v)$ but neither $B$ nor $C$ contain corresponding pairs. Therefore, favoring the left side of \cref{eq:multiset}.
         \item $p\cap t \neq \emptyset$ and $p\cap r \neq \emptyset$: 
         There are $(v_i, u_j)\in A$, $(v_i, w_k)\in B$ and $(w_k, u_j)\in C$.
         In this case it holds that $1+e^{-d(u_j,v_i)}\geq e^{-d(u_j,w_k)}+e^{-d(w_k,v_i)}$.
         \item $p\cap t = \emptyset$ and $p\cap r \neq \emptyset$: There are no corresponding vertex pairs in $A$ and $C$ but in $B$. However, this can only be the case for $|s|-|s \cap t|$ pairs and each contributes at most $1$ to the right-hand side.  
          \item $p\cap t=\emptyset$ and $p\cap r = \emptyset$: There are no corresponding vertex pairs in $A$, $B$ or $C$.
    \end{enumerate}
\end{proof}
Now, we show a strong relationship between the similarities, or distances, of two trajectories with respect to two different time intervals.
\begin{lemma}\label{lemma:bound}
    Let $Q$ and $T$ be trajectories, and $s$ and $t$ time intervals with $\mathcal{I}(Q)=s\subseteq t$.
It holds that %
$Dist(Q,T,t)=1-\frac{|s|}{|t|}+\frac{|s|}{|t|}Dist(Q,T,s)$.
\end{lemma}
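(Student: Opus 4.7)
The plan is to unfold the definition of $Dist$ in terms of $Sim$ and then show that the numerator of $Sim(Q,T,t)$ is identical to the numerator of $Sim(Q,T,s)$, so that the two similarities differ only by the normalization factor.

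First, I would observe the key structural fact: since $\mathcal{I}(Q)=s$, every interval $s_j$ appearing in $Q$ satisfies $s_j\subseteq s\subseteq t$. Consequently, for any $(v_i,t_i)\in T$ and $(u_j,s_j)\in Q$, we have $s\cap t_i\cap s_j = t_i\cap s_j = t\cap t_i\cap s_j$. This is the only nontrivial step, and it is essentially a set-theoretic bookkeeping argument; I do not expect a real obstacle here.

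Plugging this into \Cref{def:sim} gives
\begin{align*}
Sim(Q,T,t) &= \frac{1}{|t|}\sum_{\substack{(v_i,t_i)\in T\\(u_j,s_j)\in Q}} |t\cap t_i\cap s_j|\cdot e^{-d(v_i,u_j)} \\
&= \frac{|s|}{|t|}\cdot \frac{1}{|s|}\sum_{\substack{(v_i,t_i)\in T\\(u_j,s_j)\in Q}} |s\cap t_i\cap s_j|\cdot e^{-d(v_i,u_j)} \\
&= \frac{|s|}{|t|}\, Sim(Q,T,s).
\end{align*}

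Finally, substituting into \Cref{def:distance} yields
\[
Dist(Q,T,t) = 1 - \frac{|s|}{|t|}\,Sim(Q,T,s) = 1 - \frac{|s|}{|t|}\bigl(1-Dist(Q,T,s)\bigr),
\]
which rearranges to the claimed identity. The whole argument is essentially one observation followed by algebraic rearrangement, so I expect no significant obstacle beyond verifying the set-equality $s\cap t_i\cap s_j = t\cap t_i\cap s_j$ that rests on $\mathcal{I}(Q)=s\subseteq t$.
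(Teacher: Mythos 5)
Your proof is correct and follows essentially the same route as the paper: both reduce $|t\cap t_i\cap s_j|$ to $|s\cap t_i\cap s_j|$ using $s_j\subseteq s\subseteq t$, conclude $Sim(Q,T,t)=\frac{|s|}{|t|}Sim(Q,T,s)$, and then substitute into the definition of $Dist$. No issues.
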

\begin{proof}
   Assuming $s \subseteq t$  and using \Cref{def:sim} it holds that
   \begin{align*}
      Sim(Q,T,t) &= \frac{1}{|t|} \sum_{\substack{(u_j,s_j)\in Q\\(v_i,t_i)\in T}}{|t\cap s_j\cap t_i|\cdot e^{-d(u_j,v_i)}} \\
      & = \frac{1}{|t|} \sum_{\substack{(u_j,s_j)\in Q\\(v_i,t_i)\in T}} {|s \cap s_j\cap t_i|\cdot e^{-d(u_j,v_i)}}
    \end{align*}
   since $s_j \subseteq s \subseteq t$ for all $s_j$.
   Now we can apply \Cref{def:sim} again and obtain
   \begin{align*}
      Sim(Q,T,t) & = \frac{|s|}{|t|} Sim(Q,T,s).  %
    \end{align*}
      Finally, applying \Cref{def:distance} leads to the result.
\end{proof}
\Cref{lemma:triangle:ineq} and \Cref{lemma:bound} are the basis for our indices that we present in the following section.
\section{Indexing the Trajectories}
We introduce efficient indexing methods for the trajectories by applying temporal and spatial filters.
First, we give a high-level view of our approach, which consists of two phases: 1. An offline phase for preparing the index. Given a set of trajectories $\mathcal{T}$ a preprocessing phase constructs the index $\mathcal{D}$ that allows efficient queries. 
2. The query phase. %
Given a query trajectory $Q$ and a time interval $s$,
the index $\mathcal{D}$ first determines a candidate set $\mathcal{C}\subseteq \mathcal{T}$. 
For each $T\in \mathcal{C}$ the query algorithm computes the similarity $Sim(Q,T,s)$, and keeps all trajectories with a top-$k$ similarity in a heap data structure. 
The query result is the set of all trajectories with a top-$k$ similarity to $Q$ w.r.t.~$s$.
Notice that the candidate set may contain all trajectories stored in $\mathcal{D}$, e.g., if $k\geq|\mathcal{T}|$ or if all trajectories have the same similarity to the query. In the following, we describe the techniques that achieve small candidate sets wherever possible. Our techniques are based on filters for the temporal and the spatial domain.
Our indexing algorithms, as well as our query algorihms, are embarrassingly parallel.

\subsection{Pivot-Based Spatial Filters}\label{sec:pivot_filter}
We choose $h\in\mathbb{N}$ vertices $p_1,\ldots,p_h$ from which we construct $h$ pivot trajectories $P_1,\ldots,P_h$. 
The $h$ vertices are the ones that are most-frequently visited by trajectories, where we also count multiple visits from a trajectory $T$ at a vertex.
Each pivot trajectory $P_i$ stays stationary at vertex $p_i$ during the time interval $t=[a,b]$, where $a$ is the earliest starting and $b$ the latest ending time over all trajectories $T\in \mathcal{T}$.
Next we compute the pairwise distances $Dist(T,P_i,t)$ between all $T\in\mathcal{T}$ and $P_i$ for $1\leq i \leq h$ and store these distances together with the pivot trajectories. 
Given a query $(Q,s)$ we compute $Dist(Q,P_i,t)$ for $1\leq i \leq h$.
Using \Cref{lemma:triangle:ineq} and \Cref{lemma:bound}, it follows that  %
\begin{align*}
    |Dist(Q,P_i,t)-Dist(P_i,T,t)| &\leq Dist(Q,T,t)\\
&\hspace{-1cm}=  1-\frac{|s|}{|t|}+\frac{|s|}{|t|}Dist(Q,T,s), 
\end{align*}
where we use that $t \subseteq \mathcal{I}(P_i)$ which holds by construction of $P_i$.
We can use the above bound to filter out a lot of trajectories from the candidate set that are too far away from the query to be in the top-$k$ result set.  
To this end, we use a threshold radius $r$ such that we only keep trajectories $T$ for
which
\[
|Dist(Q,P_i,t)-Dist(T,P_i,t)| \leq r
\]
for all pivots $P_i$ with $1\leq i \leq h$. 

The running time needed for filtering the trajectories during a query is in $\mathcal{O}(|\mathcal{T}|+h\cdot |Q|)$.
The construction of the index utilizing pivot-based spatial filtering is efficient---we only need not compute the distance between the $h$ pivot trajectories and all $T\in\mathcal{T}$, each in $\mathcal{O}(|T|+|P_i|)$.
\begin{theorem}\label{theorem:pivot_rt_mem}
    The index based on pivot-based spatial filters can be computed in $\mathcal{O}(|\mathcal{T}|\cdot h m)$ time, where $m$ is the the maximal length of a trajectory over $\mathcal{T}$.
    The memory needed for storage is in $\mathcal{O}(|\mathcal{T}|\cdot h)$.
\end{theorem}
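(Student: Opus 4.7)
The plan is to decompose the construction into four stages and bound each separately. Specifically, the construction consists of: (i) identifying the $h$ most-frequently visited vertices as pivots, (ii) materialising the pivot trajectories $P_1,\ldots,P_h$ together with the global interval $[a,b]$, (iii) computing $Dist(T,P_i,t)$ for every $T\in\mathcal{T}$ and every $1\leq i\leq h$, and (iv) storing the resulting distance table.

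For stage (i), I would make a single pass over all trajectories and maintain a hash map keyed by vertex, incrementing the counter each time a vertex appears in some $(v_j,t_j)\in T$. This costs $O(\sum_{T\in\mathcal{T}}|T|)=O(|\mathcal{T}|\cdot m)$ time, and a linear-time selection on the (at most $|\mathcal{T}|\cdot m$) nonzero counters then extracts the top $h$ pivots within the same budget. The extremal times $a=\min_{T}T.start$ and $b=\max_{T}T.end$ can be read off in the same pass, so stage (ii) is $O(h)$ since each $P_i$ is a single $(p_i,[a,b])$ pair.

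The heart of the bound is stage (iii). The key step is to invoke \Cref{theorem:runningtime}: each pivot trajectory has length $|P_i|=1$, so $Dist(P_i,T,t)$ can be computed in $O(|P_i|+|T|)=O(m)$ time, using the standing assumption that shortest-path distances are available in constant time. Summing over the $h\cdot|\mathcal{T}|$ pairs yields $O(|\mathcal{T}|\cdot h\cdot m)$, which dominates stages (i) and (ii) and matches the claimed running time. Stage (iv) stores exactly one scalar distance per $(T,P_i)$ pair, together with the $h$ pivot vertices themselves, giving $O(|\mathcal{T}|\cdot h+h)=O(|\mathcal{T}|\cdot h)$ memory.

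The only subtle point, and the place where one has to be a little careful, is stage (i): a naive implementation that indexes the frequency counters by raw vertex identifier would allocate an array of size $\Theta(|V|)$, which is not charged in the statement. Using a hash map (or, equivalently, compacting to the set of vertices that actually appear as $v_j$ in some $T\in\mathcal{T}$) removes this $|V|$ dependence and keeps the whole construction within the announced $O(|\mathcal{T}|\cdot h m)$ time and $O(|\mathcal{T}|\cdot h)$ space.
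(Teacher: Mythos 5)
Your proposal is correct and follows essentially the same route as the paper, which justifies the theorem by the single observation that each of the $h\cdot|\mathcal{T}|$ distances $Dist(T,P_i,t)$ is computed in $\mathcal{O}(|T|+|P_i|)$ time via \Cref{theorem:runningtime} with $|P_i|=1$. Your additional care about the pivot-selection pass and avoiding a $\Theta(|V|)$ allocation is a sensible refinement the paper leaves implicit.
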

\subsection{Temporal Filter}\label{sec:temp_filter}
Notice that trajectories that have empty intersection with the query interval $s$ do not have to be considered in the candidate set 
$\mathcal{C}\subseteq\mathcal{T}$.
To filter out such trajectories we construct a binary interval tree using the following procedure.
For each node $h$ in the tree, we have a set of trajectories $\mathcal{T}_h$. We compute the median $m$ of the end-points in $\mathcal{T}_h$ and assign all trajectories that end before $m$ to the left child and all trajectories that start after $m$ to the right child of $h$. All other trajectories are stored at $h$. 
We proceed recursively until we reach a minimum size for the trajectory set $\mathcal{T}_{l}$, where $l$ is a leaf of the tree.

We combine the temporal filter with the pivot-based spatial filter by using a pivot-based spatial filter at each node $h$ for the trajectories stored at node $h$.

The running time needed for temporal filtering during a query is in $\mathcal{O}(|\mathcal{C}|+h\cdot |Q|)$, with $|\mathcal{C}|$ being the size of the candidate set returned by the index. 
During the construction, we have to do the pivot based filter construction at each vertex.
\begin{theorem}\label{theorem:temporal_rt_mem}
    The tree index can be computed in $\mathcal{O}(\log(|\mathcal{T}|)\cdot |\mathcal{T}|\cdot h m)$ time, where $m$ is the the maximal length of a trajectory in $\mathcal{T}$.
    The memory needed for storage is in $\mathcal{O}(|\mathcal{T}|\cdot h)$.
\end{theorem}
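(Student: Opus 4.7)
The plan is to reduce the analysis to \Cref{theorem:pivot_rt_mem} by charging the pivot-filter cost at each tree node to the level on which that node sits, and then multiplying by the tree depth. First I would show that the binary interval tree has depth $O(\log |\mathcal{T}|)$. Since every split is performed at the median of the end-points of the trajectories currently being partitioned, each child receives at most half of its parent's trajectories, so the subproblem halves at every recursive step and the recursion terminates after $O(\log|\mathcal{T}|)$ levels once the leaf-size threshold is reached. The trajectories retained at an internal node (those whose interval straddles the median) do not enter the recursion and therefore do not affect the depth bound.

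Next I would bound the work performed per level. At a fixed level $\ell$, the trajectory sets stored at the nodes of that level are pairwise disjoint subsets of $\mathcal{T}$, so their cardinalities sum to at most $|\mathcal{T}|$. Computing the median of end-points and performing the left/right/retain partitioning at each node are linear in the number of trajectories at that node, and by \Cref{theorem:pivot_rt_mem} the pivot-filter construction at a node with $|\mathcal{T}_h|$ trajectories costs $O(|\mathcal{T}_h|\cdot h\cdot m)$. Summing over the nodes at level $\ell$ therefore gives $O(|\mathcal{T}|\cdot h\cdot m)$ work per level, and multiplying by the $O(\log|\mathcal{T}|)$ levels yields the claimed total running time $O(\log(|\mathcal{T}|)\cdot|\mathcal{T}|\cdot h\cdot m)$.

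For the space bound, each trajectory $T\in\mathcal{T}$ is stored at exactly one node of the tree, namely the deepest ancestor at which $T$ straddles the median (or a leaf). At that node the pivot-based spatial filter contributes $O(h)$ additional entries for $T$ (one precomputed distance per pivot), so the total storage of all pivot filters is $O(|\mathcal{T}|\cdot h)$. The skeleton of the tree has $O(|\mathcal{T}|)$ nodes, which is absorbed into this bound. The mildest obstacle I anticipate is making the halving argument rigorous: one must observe that a trajectory whose start-point lies above the median of end-points automatically has its end-point in the upper half, so at most $|\mathcal{T}_h|/2$ trajectories move to the right child, while by definition of the median at most $|\mathcal{T}_h|/2$ move to the left child. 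Once this is in place, the remaining analysis is routine bookkeeping.
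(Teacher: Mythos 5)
Your proof is correct and follows exactly the reasoning the paper intends: the paper itself gives no formal proof of \Cref{theorem:temporal_rt_mem} beyond the remark that the pivot-filter construction is repeated at each tree node, and your argument (logarithmic depth via median halving, disjointness of the trajectory sets on each level giving $O(|\mathcal{T}|\cdot h\cdot m)$ work per level, and one stored node per trajectory for the $O(|\mathcal{T}|\cdot h)$ space bound) supplies precisely the missing details. If anything, your accounting is conservative, since the paper builds each node's pivot filter only over the trajectories \emph{stored} at that node, which would drop the $\log(|\mathcal{T}|)$ factor from the filter-construction term; your version still correctly establishes the stated upper bound.
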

\subsection{Upper Bounding}\label{sec:upperbounding}
During the computations of the similarities between $Q$ and a trajectory $T$ in a set of trajectories $\mathcal{T}$ we can apply the following upper bounding technique. 
Let $T_1,\ldots,T_{|\mathcal{C}|}$ be the trajectories of the candidate set in order of processing. 
After computing the similarity of the first $k$ trajectories, we can stop the similarity computation between $Q$ and $T_h$ for $h>k$ early if we can assure that $Sim(Q,T_h,s)$ is smaller than any similarity between $Q$ and any $T\in \mathcal{T}$ computed so far. To this end, we iteratively update an upper bound $\bar{s}$ for the value of $Sim(Q,T_h,s)$.
Consider the computation of the similarity $Sim(Q,T_{h},s)$ described in the proof of Theorem \ref{theorem:runningtime}. 
At each step, before increasing $i$ or $j$, we obtain the upper bound $\bar{s}$ for $Sim(Q,T_{h},s)$ by assuming that in each remaining time step the trajectories are at the same vertices. If $\bar{s}$ is smaller than the $k$ lowest similarity found so far, we stop the computation of $Sim(Q,T_{h},s)$ and proceed with $Sim(Q,T_{h+1},s)$. 
\section{Comparison to Existing Algorithms}\label{sec:grossi}
Grossi et al.~\cite{grossi2020finding} introduce three algorithms for answering top-$k$ similarity queries in a spatio-temporal setting. The idea of their baseline algorithm is to have a preprocessing phase that constructs an interval tree at each vertex $v$ of the graph.
The interval tree at $v\in V$ contains all pairs of $(T.id,t)$ if trajectory $T\in \mathcal{T}$ visits $v$ or any of its adjacent vertices during time interval $t$. Here, $T.id$ is the identifier of the trajectory $T$.
Then, using the constructed index, a query $(Q,s)$ is answered by visiting all vertices $v$ with $(v,t)\in Q$ and collecting all ids of trajectories that visit vertex $v$ or any of its neighbors during $t\cap s$.
With the collected set of ids, the candidate set of trajectories can be evaluated, and the top-$k$ similar trajectories are found. 
Therefore, the running time and memory requirements depend on the number of trajectories, the lengths of the trajectories, and the vertex degrees.
Moreover, the algorithm solves a special case, in which only trajectories are considered that have at least one vertex in hop-distance less or equal to $1$ to a vertex of the query trajectory. 
A simple example for which the algorithm fails to find a similar trajectory can be constructed in a graph consisting of a chain of four vertices, i.e., $G=(\{v_1,\ldots,v_4\},\{v_1v_2,v_2v_3,v_3v_4\})$, and trajectories $T=((v_1,[0,1]))$ and $Q=((v_4,[0,1]))$. 
After the preprocessing phase, only the interval trees at the vertices $v_1$ and $v_2$ contain the id of $T$. For a query $(Q,[0,1])$, the algorithm will only look at the empty interval tree at $v_4$ and cannot find $T$.
Grossi et al.~\cite{grossi2020finding} also introduce two heuristic algorithms for the top-$k$ query problem. Their idea is to reduce the graph size and then shrink the length of the query trajectory or all trajectories to save running time by reducing the number of distance computations between vertices. However, this can also lead to larger candidate sets, and hence more evaluations of the similarity function are necessary.

Our algorithms differ from the ones suggested by Grossi et al.~\cite{grossi2020finding} as follows. First, we introduced an alternative and improved similarity function for which we showed certain metric-like properties. Our indices use these properties to reduce the size of the trajectory candidate set and, hence, reduce the number of similarity computations. The memory requirements of our indices are independent of the size of the graph and only linear in number of the trajectories (see Theorems \ref{theorem:pivot_rt_mem} and \ref{theorem:temporal_rt_mem}).
By using the upper bounding technique without preprocessing and constructing an index, we obtain an exact algorithm that is competitive in terms of running time.
\section{Experiments}
In this section, we evaluate our new algorithms and compare them to the approaches suggested in~\cite{grossi2020finding}.
We are interested in answering the following questions:
\begin{itemize}\itemsep0em
    \item[\textbf{Q1:}] How fast are the indexing times of our algorithms compared to the algorithms in~\cite{grossi2020finding}? %
    \item[\textbf{Q2:}] How fast are queries of our algorithms compared to the baseline and to the heuristics in~\cite{grossi2020finding}? Do our index solutions improve the query times?
    \item[\textbf{Q3:}] How good is the quality of the approximated top-$k$ queries?
    \item[\textbf{Q4:}] How do the choices of the radius $r$ and the number of pivots $h$ impact running time and accuracy?
    \item[\textbf{Q5:}] How much does the upper bounding improve the running time?
\end{itemize}
\subsection{Algorithms and Experimental Protocol}
We implemented the following new algorithms:
\begin{itemize}\itemsep0em
    \item \textsc{Exact} is the linear scan over the complete data set that does not use indexing. %
    \item \textsc{Tree} is the index that uses an interval tree with additional pivot-based spatial filtering at each node of the interval tree (see \Cref{sec:temp_filter}). 
    \item \textsc{Pivot} is the index that applies the pivot-based spatial filtering globally (see \Cref{sec:pivot_filter}). 
\end{itemize}
\textsc{Exact}, \textsc{Tree} and \textsc{Pivot} use the upper bounding technique (\Cref{sec:upperbounding}).
Furthermore, we implemented the following algorithms from~\cite{grossi2020finding}:
\begin{itemize}\itemsep0em
    \item \textsc{Gbase,} the baseline algorithm in~\cite{grossi2020finding} (see~\Cref{sec:grossi}).
    \item \textsc{Gshq} and \textsc{Gshqt} denote their heuristic algorithms based on shrinking the graph and the trajectories and gaining advantage of the smaller graph size and reduced trajectory lengths (see~\cite{grossi2020finding}).
\end{itemize}
All of our implementations use the similarity measure defined in \Cref{def:sim}.
We implemented all algorithms in C++ using GNU CC Compiler 9.3.0 with the flag \texttt{--O2}.
All experiments were conducted on a workstation with an AMD EPYC 7402P 24-Core Processor with 2.80 GHz and 256 GB of RAM running \text{Ubuntu 18.04.3} LTS. The source code and data sets are online available at \url{https://gitlab.com/tgpublic/topktraj}.
\subsection{Data Sets}
For the evaluation of the algorithms, we used the following data sets:
\begin{table}[t]
    \centering
    \caption{Statistics and properties of the synthetic and real-world data sets.}  %
    \label{table:datasets_stats2}
    \resizebox{1.0\linewidth}{!}{ 	\renewcommand{\arraystretch}{1}
        \hspace{-3mm}
        \begin{tabular}{lcccr}\toprule
            \multirow{3}{2cm}{\vspace*{4pt}\textbf{}\vspace*{4pt}}&\multicolumn{4}{c}{\textbf{Properties}}\\
            \cmidrule{2-5}
            \textbf{Data~set}                 &   $|V|$       & $|E|$         &  $\#$ Traj.  &  $\varnothing$ Traj. Len.  \\\midrule
            \emph{Facebook1} 		   &   $4\,039$    &   $88\,234$   &  $1\,000$     & $1\,482.0\sz\pm301.4$ \\ 
            \emph{Facebook2} 		   &   $4\,039$    &   $88\,234$   &  $10\,000$    & $1\,497.6\sz\pm287.3$ \\ 
            \emph{Milan} 		       &   $3\,000$    &  $123\,406$   &  $9\,525$     & $141.5\sz\pm129.9$  \\  
            \emph{T-Drive} 	           &   $2\,000$    &  $500\,930$   &  $10\,357$    & $598.0\sz\pm457.4$ \\  
            \bottomrule
        \end{tabular}
    }
\end{table}

\begin{itemize}
    \item \emph{Facebook 1\&2:} The network consists of Facebook friendship relations~\cite{leskovec2012learning} and is provided by the \emph{Stanford Network Analyses Project}\footnote{\url{https://snap.stanford.edu/data/ego-Facebook.html}}. We have generated synthetic trajectories. 
    \item \emph{Milan:} The Milan data set is based on GPS trajectories of private cars in the city of Milan\footnote{\url{https://sobigdata.d4science.org/catalogue-sobigdata?path=/dataset/gps_track_milan_italy}}.  
    \item \emph{T-Drive:}  The data set contains GPS data of taxi trajectories in Beijing~\cite{tdrive1,tdrive2}. 
\end{itemize}
For the \emph{Milan} and \emph{T-Drive} data set, we generated a graph by first interpreting each GPS location point as a vertex and then clustering these vertices using the $k$-means algorithm. The resulting clusters are the final vertices. Two clusters are connected by an edge if at least one trajectory visits a vertex in each of both clusters in a consecutive time interval. 
We assign the distance between the centers of the clusters as the distance to the edge.
\Cref{table:datasets_stats2} shows some statistics for the data sets.

\subsection{Results}
We answer questions \textbf{Q1} to \textbf{Q5}.

\medskip
\noindent\textbf{Q1: } \Cref{table:indexing_times} shows the running times for indexing the data sets.
For \textsc{Tree} and \textsc{Pivot}, we choose $h=8$ pivots  for both \emph{Facebook} data sets and set $h=64$ for the \emph{T-Drive} data set. In case of the \emph{Milan} data set, we choose $h=32$ for \textsc{Tree} and $h=16$ for \textsc{Pivot}.
The construction of our index structures is several orders of magnitude faster than that of the algorithms suggested in \cite{grossi2020finding}.  
The largest  speed-up is achieved  for the \emph{T-Drive} data set, for which \textsc{Pivot} is over $22\,000$ times faster. For \emph{Facebook2} \textsc{Pivot} is over $6\,000$ faster. Out of all indexing approaches, as expected, \textsc{Pivot} is the fasted method for all data sets.
\textsc{Tree} is the second fastest with very large speed ups compared to \textsc{Gbase}, \textsc{Gshq} and \textsc{Gshqt}.
The low indexing times allow us to learn the parameter $h$, i.e., finding a suitable number of pivots.
\begin{table}[ht]
    \centering
    \caption{Indexing times in seconds.}  
    \label{table:indexing_times}
    \resizebox{0.85\linewidth}{!}{ 	\renewcommand{\arraystretch}{0.9}
        \hspace{-3mm}
        \begin{tabular}{lccccc}\toprule
            \multirow{5}{1cm}{\vspace*{4pt}\textbf{}\vspace*{4pt}}&\multicolumn{5}{c}{\textbf{Algorithm}}\\
            \cmidrule{2-6}
            \textbf{Data~set}       & \textsc{Tree} & \textsc{Pivot}  &  \textsc{Gbase} &  \textsc{Gshq} &  \textsc{Gshqt}\\ \midrule
            \emph{Facebook1}        &  $0.14$       &    $0.06$       &  $188.23$       &  $32.77$       & $25.76$        \\
            \emph{Facebook2}        &  $1.80$       &    $0.47$       &  $3\,132.22$    &  $485.26$      & $324.47$       \\	
            \emph{Milan}            &  $0.26$       &    $0.07$       &   $232.13$      &  $43.66$       & $34.78$        \\
            \emph{T-Drive}          &  $1.09$       &    $0.49$       &  $11\,823.81$   &  $475.06$      & $403.50$       \\
            \bottomrule
        \end{tabular}
    }
\end{table}
\begin{table}[ht]
    \centering
    \caption{Threshold values $r$ used for the pivot based filters during query time.}  
    \label{table:treshold_radius}
    \resizebox{0.8\linewidth}{!}{ 	\renewcommand{\arraystretch}{0.8}
        \hspace{-3mm}
        \begin{tabular}{lcccc}\toprule
            \multirow{3}{1cm}{\vspace*{4pt}\vspace*{4pt}}&\multicolumn{4}{c}{\textbf{Data~set}}\\
            \cmidrule{2-5}
            \textbf{Index}        & \emph{Facebook1} & \emph{Facebook2}  &  \emph{Milan} &  \emph{T-Drive} \\ \midrule
            \textsc{Tree}         &  $0.1$           &    $0.1$          &  $0.2$        &  $0.2$         \\
            \textsc{Pivot}        &  $0.1$           &    $0.1$          &  $0.02$       &  $0.25$        \\	  
            \bottomrule
        \end{tabular}
    }
\end{table}
\begin{table}[ht]
    \centering
    \caption{Query times in seconds for top-$k$ similarity queries. The running times are the average and standard deviations over $100$ queries. The fastest running time in each row is highlighted.}  
    \label{table:query_times}
    \resizebox{1\linewidth}{!}{ 	\renewcommand{\arraystretch}{0.8}
        \hspace{-3mm}
        \begin{tabular}{lccccccc}\toprule
            \multirow{5}{1cm}{\vspace*{4pt}\textbf{}\vspace*{4pt}}&\multicolumn{7}{c}{\textbf{Algorithm}}\\
            \cmidrule{3-8}
            \textbf{Data set}& $k$    & \textsc{Exact}& \textsc{Tree}& \textsc{Pivot}  &  \textsc{Gbase} &  \textsc{Gshq} &  \textsc{Gshqt}\\ \midrule
            \emph{Facebook1} &  $1$   &  $0.176\sz\pm0.04$&  $0.071\sz\pm0.03$   &   $\textbf{0.066}\sz\pm0.03$  &   $0.235\sz\pm0.05$   &  $0.176\sz\pm0.03$  & $0.168\sz\pm0.04$ \\	
            \emph{Facebook2} &  $1$   &  $1.709\sz\pm0.53$&  $0.673\sz\pm0.40$   &   $\textbf{0.666}\sz\pm0.39$  &   $2.529\sz\pm0.58$   &  $1.910\sz\pm0.48$  & $1.556\sz\pm0.65$ \\	
            \emph{Milan}     &  $1$   &  $\textbf{0.004}\sz\pm0.00$&  $0.014\sz\pm0.01$   &   $0.015\sz\pm0.00$  &   $0.016\sz\pm0.00$   &  $0.116\sz\pm0.05$  & $0.128\sz\pm0.05$ \\
            \emph{T-Drive}   &  $1$   &  $0.019\sz\pm0.02$&  $\textbf{0.017}\sz\pm0.01$   &   $\textbf{0.017}\sz\pm0.01$  &   $2.101\sz\pm0.87$   &  $0.560\sz\pm0.26$  & $0.615\sz\pm0.26$ \\\cmidrule{1-8}	
            \emph{Facebook1} &  $4$   &  $0.179\sz\pm0.04$&  $0.069\sz\pm0.03$   &   $\textbf{0.067}\sz\pm0.03$  &   $0.238\sz\pm0.04$   &  $0.176\sz\pm0.03$  & $0.168\sz\pm0.04$ \\
            \emph{Facebook2} &  $4$   &  $1.735\sz\pm0.53$&  $0.680\sz\pm0.40$   &   $\textbf{0.668}\sz\pm0.40$  &   $2.563\sz\pm0.55$   &  $1.910\sz\pm0.48$  & $1.560\sz\pm0.65$ \\	
            \emph{Milan}     &  $4$   &  $\textbf{0.009}\sz\pm0.01$&  $0.017\sz\pm0.01$   &   $0.017\sz\pm0.01$  &   $0.019\sz\pm0.01$   &  $0.116\sz\pm0.05$  & $0.127\sz\pm0.05$ \\
            \emph{T-Drive}   &  $4$   &  $0.028\sz\pm0.01$&  $0.021\sz\pm0.01$   &   $\textbf{0.020}\sz\pm0.01$  &   $0.234\sz\pm0.09$   &  $0.567\sz\pm0.26$  & $0.615\sz\pm0.26$ \\\cmidrule{1-8}
            \emph{Facebook1} &  $16$  &  $0.183\sz\pm0.04$&  $0.070\sz\pm0.03$   &   $\textbf{0.068}\sz\pm0.03$  &   $0.242\sz\pm0.04$   &  $0.176\sz\pm0.03$  & $0.168\sz\pm0.04$ \\
            \emph{Facebook2} &  $16$  &  $1.766\sz\pm0.53$&  $0.686\sz\pm0.41$   &   $\textbf{0.667}\sz\pm0.40$  &   $2.591\sz\pm0.57$   &  $1.910\sz\pm0.48$  & $1.555\sz\pm0.65$ \\	
            \emph{Milan}     &  $16$  &  $\textbf{0.015}\sz\pm0.01$&  $0.021\sz\pm0.01$   &   $0.021\sz\pm0.01$  &   $0.021\sz\pm0.00$   &  $0.116\sz\pm0.05$  & $0.127\sz\pm0.05$ \\
            \emph{T-Drive}   &  $16$  &  $0.038\sz\pm0.04$&  $0.027\sz\pm0.02$   &   $\textbf{0.024}\sz\pm0.01$  &   $0.252\sz\pm0.09$   &  $0.572\sz\pm0.26$  & $0.615\sz\pm0.26$ \\\cmidrule{1-8}
            \emph{Facebook1} &  $64$  &  $0.188\sz\pm0.04$&  $0.071\sz\pm0.03$   &   $\textbf{0.069}\sz\pm0.03$  &   $0.249\sz\pm0.04$   &  $0.178\sz\pm0.03$  & $0.168\sz\pm0.04$ \\
            \emph{Facebook2} &  $64$  &  $1.805\sz\pm0.52$&  $0.699\sz\pm0.41$   &   $\textbf{0.691}\sz\pm0.40$  &   $2.626\sz\pm0.57$   &  $1.910\sz\pm0.48$  & $1.556\sz\pm0.65$ \\
            \emph{Milan}     &  $64$  &  $\textbf{0.023}\sz\pm0.01$&  $0.025\sz\pm0.01$   &   $0.026\sz\pm0.01$  &   $0.026\sz\pm0.01$   &  $0.116\sz\pm0.05$  & $0.128\sz\pm0.05$ \\
            \emph{T-Drive}   &  $64$  &  $0.053\sz\pm0.04$&  $0.034\sz\pm0.02$   &   $\textbf{0.030}\sz\pm0.01$  &   $0.270\sz\pm0.10$   &  $0.576\sz\pm0.26$  & $0.617\sz\pm0.26$ \\
            \bottomrule
        \end{tabular}
    }
\end{table}
\begin{table}[ht]
    \centering
    \caption{Average candidate set sizes and standard deviation over 100 queries.}  
    \label{table:candidate_sets_sizes}
    \resizebox{1\linewidth}{!}{ 	\renewcommand{\arraystretch}{0.8}
        \hspace{-3mm}
        \begin{tabular}{lccccc}\toprule
            \multirow{5}{1cm}{\vspace*{4pt}\textbf{}\vspace*{4pt}}&\multicolumn{5}{c}{\textbf{Algorithm}}\\
            \cmidrule{2-6}
            \textbf{Data~set}       & \textsc{Tree} & \textsc{Pivot}  &  \textsc{Gbase} &  \textsc{Gshq} &  \textsc{Gshqt}\\ \midrule
            \emph{Facebook1}        & $ 256.0\sz\pm121.1$  &  $256.1\sz\pm121.1$  &  $891.7\sz\pm15.3$    &  $864.5\sz\pm58.5$     & $831.7\sz\pm182.9$      \\
            \emph{Facebook2}        & $2709.4\sz\pm1376.4$ & $2709.3\sz\pm1376.5$ &  $9820.9\sz\pm133.0$  &  $9408.4\sz\pm1045.5$  & $7729.9\sz\pm2739.0$    \\	
            \emph{Milan}            & $3120.9\sz\pm2469.8$ & $5178.0\sz\pm2723.8$ &  $4143.4\sz\pm1158.9$ &  $9311.6\sz\pm541.5$   & $9244.9\sz\pm757.8$    \\
            \emph{T-Drive}          & $1255.7\sz\pm1070.0$ & $1753.8\sz\pm1350.4$ &  $9873.2\sz\pm734.8$  &  $10231.0\sz\pm25.2$   & $10232.5\sz\pm25.0$       \\
            \bottomrule
        \end{tabular}
    }
\end{table}

\medskip
\noindent\textbf{Q2: } We selected $100$ trajectories randomly from the data sets as queries. The query interval is set to $\mathcal{I}(Q)$. We ran the algorithms for $k\in\{1,4,16,64\}$.
\Cref{table:treshold_radius} shows the threshold radii that we used for the pivot-based filtering, and \Cref{table:query_times} shows the average running times for querying a trajectory from the data set.
First, note that the query times of our exact approach (\textsc{Exact}) are lower than the query times of \textsc{Gbase} for all data sets. For the \emph{Facebook2} and the \emph{T-Drive} instances \textsc{Gbase} is up 0.2 seconds slower.
We will see later that \textsc{Gbase}, in contrast to our exact approach, does not always find the optimal solution set. Both of our index structures lead to accelerated query times compared to the exact approach for all data sets but the \emph{Milan} data set. The largest speed-up of about three to four is achieved for \textsc{Pivot} on the \emph{Facebook} instances. Note that almost always, the two heuristics \textsc{Gshq} and \textsc{Gshqt} are much slower in answering the queries. For the \emph{Milan} and \emph{T-Drive} instances, they are even slower than our exact approach. The reason is that they often have large candidate sets, see \Cref{table:candidate_sets_sizes}.
\textsc{Tree} is on-par with \textsc{Pivot} and has, in most cases, only a little higher running times beside the more complex data structure. 
The candidate set sizes of \textsc{Tree} and \textsc{Pivot} are similar for the \emph{Facebook} data sets, see \Cref{table:candidate_sets_sizes}. For the \emph{Milan} data set, \textsc{Tree} returns a smaller candidate set and has a slightly better running time for $k=1$ and $k=64$ compared to \textsc{Pivot}.
The full potential of the \textsc{Tree} index does not come to play for the other data sets due to the temporal distribution of the trajectories, and or the limited size.  
We suspect that the high running time of \textsc{Gbase} for \emph{Facebook2} and $k=1$ is an outlier and the result of the very high memory usage of the algorithm.
\Cref{fig:runningtimesk64} shows the average running times for $k=64$. %
\begin{figure}
    \centering
    \includegraphics[width=1\linewidth]{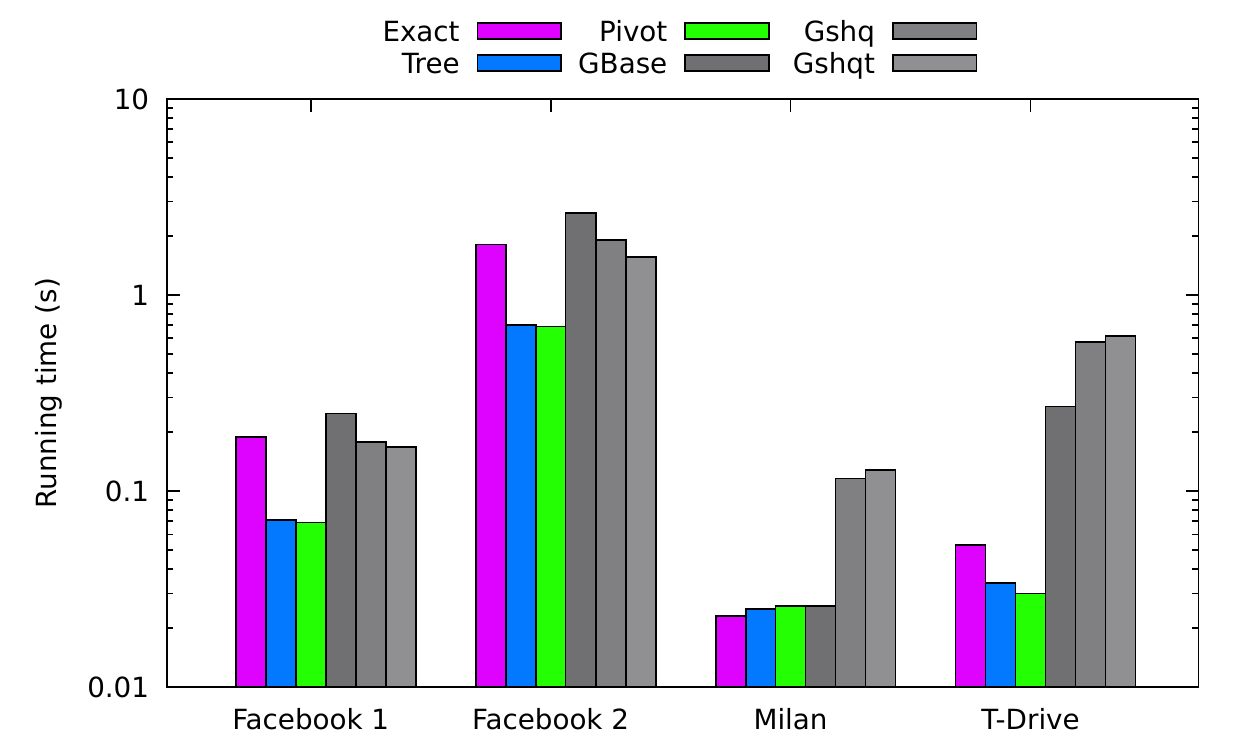}
    \caption{The average running times for $k=64$ over $100$ queries on a logarithmic scale.}
    \label{fig:runningtimesk64}
\end{figure}
\begin{table}[ht]
    \centering
    \caption{The SSR results are the averages and standard deviations over $100$ queries.}  
    \label{table:SSR_resultt}
    \resizebox{1.0\linewidth}{!}{ 	\renewcommand{\arraystretch}{0.8}
        \hspace{-3mm}
        \begin{tabular}{lcccccc}\toprule
            \multirow{5}{1cm}{\vspace*{4pt}\textbf{}\vspace*{4pt}}&\multicolumn{6}{c}{\textbf{Algorithm}}\\
            \cmidrule{3-7}
            \textbf{Data set}            & $k$ & \textsc{Tree} & \textsc{Pivot}  &  \textsc{Gbase} &  \textsc{Gshq} &  \textsc{Gshqt}\\ \midrule
            \emph{Facebook1} &  $1$   & $0.99\sz\pm0.01$ & $0.99\sz\pm0.01$ & $1.00\sz\pm0.00$ & $1.00\sz\pm0.00$ & $0.95\sz\pm0.19$ \\
            \emph{Facebook2} &  $1$   & $0.99\sz\pm0.00$ & $0.99\sz\pm0.00$ & $1.00\sz\pm0.00$ & $1.00\sz\pm0.00$ & $0.93\sz\pm0.23$ \\
            \emph{Milan}     &  $1$   & $0.95\sz\pm0.18$ & $0.96\sz\pm0.16$ & $1.00\sz\pm0.00$ & $1.00\sz\pm0.00$ & $1.00\sz\pm0.00$ \\
            \emph{T-Drive}   &  $1$   & $0.99\sz\pm0.02$ & $0.99\sz\pm0.02$ & $1.00\sz\pm0.00$ & $1.00\sz\pm0.00$ & $1.00\sz\pm0.00$ \\	\cmidrule{1-7}	
            \emph{Facebook1} &  $4$   & $0.99\sz\pm0.01$ & $0.99\sz\pm0.01$ & $1.00\sz\pm0.00$ & $1.00\sz\pm0.00$ & $0.95\sz\pm0.19$ \\
            \emph{Facebook2} &  $4$   & $0.99\sz\pm0.00$ & $0.99\sz\pm0.00$ & $1.00\sz\pm0.00$ & $1.00\sz\pm0.00$ & $0.93\sz\pm0.23$ \\	
            \emph{Milan}     &  $4$   & $0.90\sz\pm0.22$ & $0.96\sz\pm0.17$ & $0.99\sz\pm0.00$ & $1.00\sz\pm0.00$ & $0.99\sz\pm0.01$ \\
            \emph{T-Drive}   &  $4$   & $0.99\sz\pm0.03$ & $0.99\sz\pm0.03$ & $0.99\sz\pm0.00$ & $1.00\sz\pm0.00$ & $1.00\sz\pm0.00$ \\	\cmidrule{1-7}		
            \emph{Facebook1} &  $16$  & $0.97\sz\pm0.03$ & $0.97\sz\pm0.06$ & $1.00\sz\pm0.00$ & $1.00\sz\pm0.00$ & $0.95\sz\pm0.19$ \\	
            \emph{Facebook2} &  $16$  & $0.99\sz\pm0.00$ & $0.99\sz\pm0.00$ & $1.00\sz\pm0.00$ & $1.00\sz\pm0.00$ & $0.93\sz\pm0.23$ \\
            \emph{Milan}     &  $16$  & $0.81\sz\pm0.28$ & $0.91\sz\pm0.21$ & $0.99\sz\pm0.02$ & $1.00\sz\pm0.00$ & $0.99\sz\pm0.06$ \\
            \emph{T-Drive}   &  $16$  & $0.95\sz\pm0.09$ & $0.97\sz\pm0.09$ & $0.99\sz\pm0.00$ & $0.99\sz\pm0.00$ & $1.00\sz\pm0.00$ \\	\cmidrule{1-7}	
            \emph{Facebook1} &  $64$  & $0.91\sz\pm0.16$ & $0.91\sz\pm0.16$ & $1.00\sz\pm0.00$ & $1.00\sz\pm0.00$ & $0.95\sz\pm0.19$ \\
            \emph{Facebook2} &  $64$  & $0.99\sz\pm0.01$ & $0.99\sz\pm0.01$ & $1.00\sz\pm0.00$ & $1.00\sz\pm0.00$ & $0.93\sz\pm0.23$ \\
            \emph{Milan}     &  $64$  & $0.70\sz\pm0.30$ & $0.84\sz\pm0.23$ & $0.96\sz\pm0.04$ & $0.99\sz\pm0.00$ & $0.98\sz\pm0.07$ \\
            \emph{T-Drive}   &  $64$  & $0.87\sz\pm0.17$ & $0.91\sz\pm0.18$ & $0.99\sz\pm0.00$ & $0.99\sz\pm0.00$ & $1.00\sz\pm0.00$ \\	
            \bottomrule
        \end{tabular}
    }	
\end{table}
\begin{table}
    \centering
    \caption{Candidate set sizes $|\mathcal{C}|$, running times in $s$ and SSR for varying number of pivot elements $h$ and radii $r$ for the \emph{T-Drive} data set. We report the average and the standard deviation over 100 queries.}  
    \label{table:r_h_variation}
    \resizebox{1\linewidth}{!}{ 	\renewcommand{\arraystretch}{1}
        \hspace{-3mm}
        \begin{tabular}{lcc@{\hspace{4mm}}ccc@{\hspace{6mm}}ccc}\toprule
            \multirow{5}{0mm}{\vspace*{4pt}\textbf{}\vspace*{4pt}}&&&\multicolumn{3}{c}{$r=0.1$}&\multicolumn{3}{c}{$r=0.4$}\\
            \cmidrule{4-9}
            &$h$  &  $k$ & $size$ & $time$  & \emph{SSR} & $size$ &  $time$ & {SSR}\\ \midrule
            \multirow{4}{*}{\rotatebox{90}{\textsc{Tree}}}    
            & 64  &  1   & $354.7\sz\pm426.7$   &  $\textbf{0.006}\sz\pm0.00$  &  $\textbf{0.98}\sz\pm0.05$    &  $3392.8\sz\pm1931.4$  &  $0.053\sz\pm0.03$  &  $1.00\sz\pm0.00$  \\
            &128  &  1   & $247.6\sz\pm362.3$   &  $0.005 \sz\pm0.00$  &  $0.97\sz\pm0.07$    &  $2329.0\sz\pm1647.6$  &  $0.035\sz\pm0.02$  &  $0.99\sz\pm0.00$  \\	
            & 64  &  64  & $354.7\sz\pm426.7$   &  $0.012\sz\pm0.01$  &  $\textcolor{red}{0.61}\sz\pm0.30$    &  $3392.8\sz\pm1931.4$  &  $0.076\sz\pm0.04$  &  $0.96\sz\pm0.09$  \\
            &128  &  64  & $247.6\sz\pm362.2$   &  $0.009\sz\pm0.01$  &  $\textcolor{red}{0.43}\sz\pm0.35$    &  $2329.0\sz\pm1647.6$  &  $\textbf{0.055}\sz\pm0.03$  &  $\textbf{0.93}\sz\pm0.12$  \\\cmidrule{1-9}
            \multirow{4}{*}{\rotatebox{90}{\textsc{Pivot}}}   
            &64   &  1   & $337.3\sz\pm431.9$   &  $\textbf{0.006}\sz\pm0.00$  &  $\textbf{0.98}\sz\pm0.05$    &  $3391.5\sz\pm1953.8$ &  $0.031\sz\pm0.01$    &  $1.00\sz\pm0.00$  \\
            &128  &  1   & $230.0\sz\pm365.3$   &  $0.005\sz\pm0.00$  &  $0.97\sz\pm0.07$    &  $2369.6\sz\pm1666.5$ &  $0.023\sz\pm0.01$    &  $0.99\sz\pm0.00$  \\	
            &64   &  64  & $337.3\sz\pm431.9$   &  $0.009\sz\pm0.00$  &  $\textcolor{red}{0.61}\sz\pm0.30$    &  $3391.5\sz\pm1953.8$ &  $0.051\sz\pm0.02$    &  $0.96\sz\pm0.09$  \\
            &128  &  64  & $230.0\sz\pm365.3$   &  $0.007\sz\pm0.00$  &  $\textcolor{red}{0.43}\sz\pm0.35$    &  $2369.6\sz\pm1666.5$ &  $\textbf{0.038}\sz\pm0.02$    &  $\textbf{0.93}\sz\pm0.12$  \\
            \bottomrule
        \end{tabular}
    }
\end{table}
\begin{figure}[htb]
    \centering
    \includegraphics[width=1.0\linewidth]{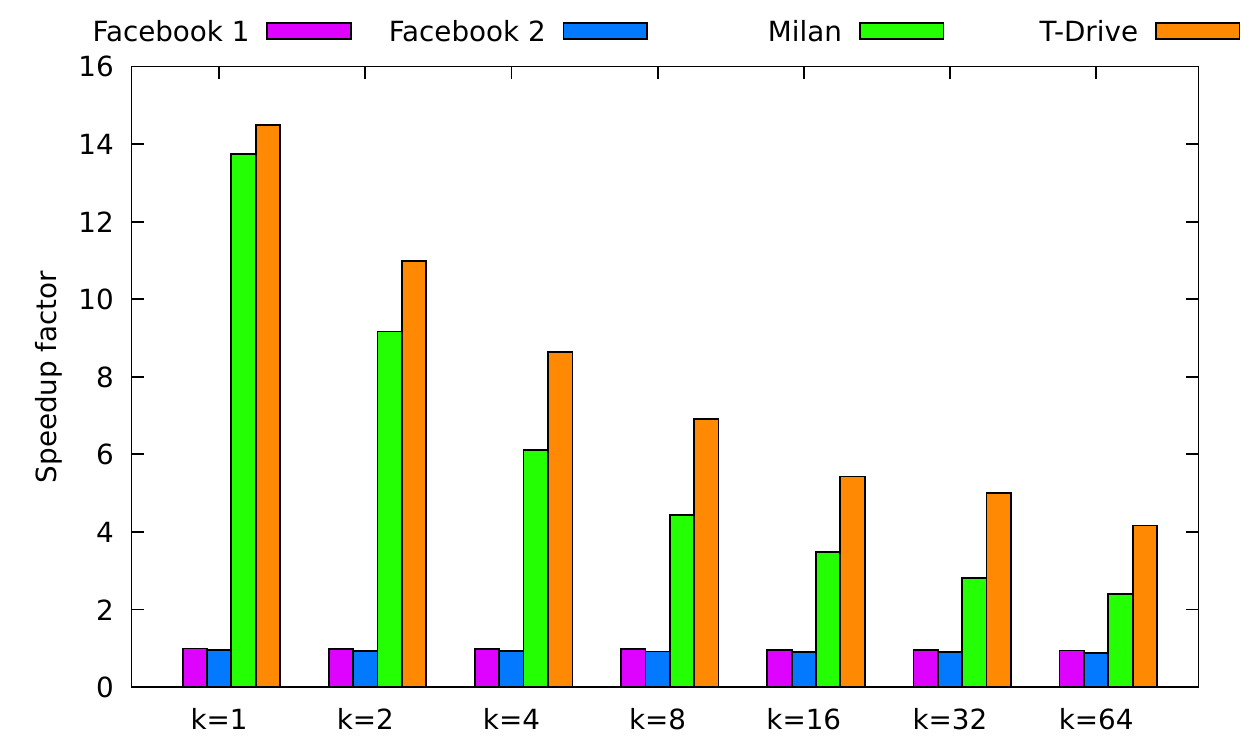}
    \caption{Average speed up by using upper bounding during the calculation of the similarity over $100$ queries}
    \label{fig:blupspeedup}
\end{figure}
\begin{figure*}[t]
    \includegraphics[width=0.24\textwidth]{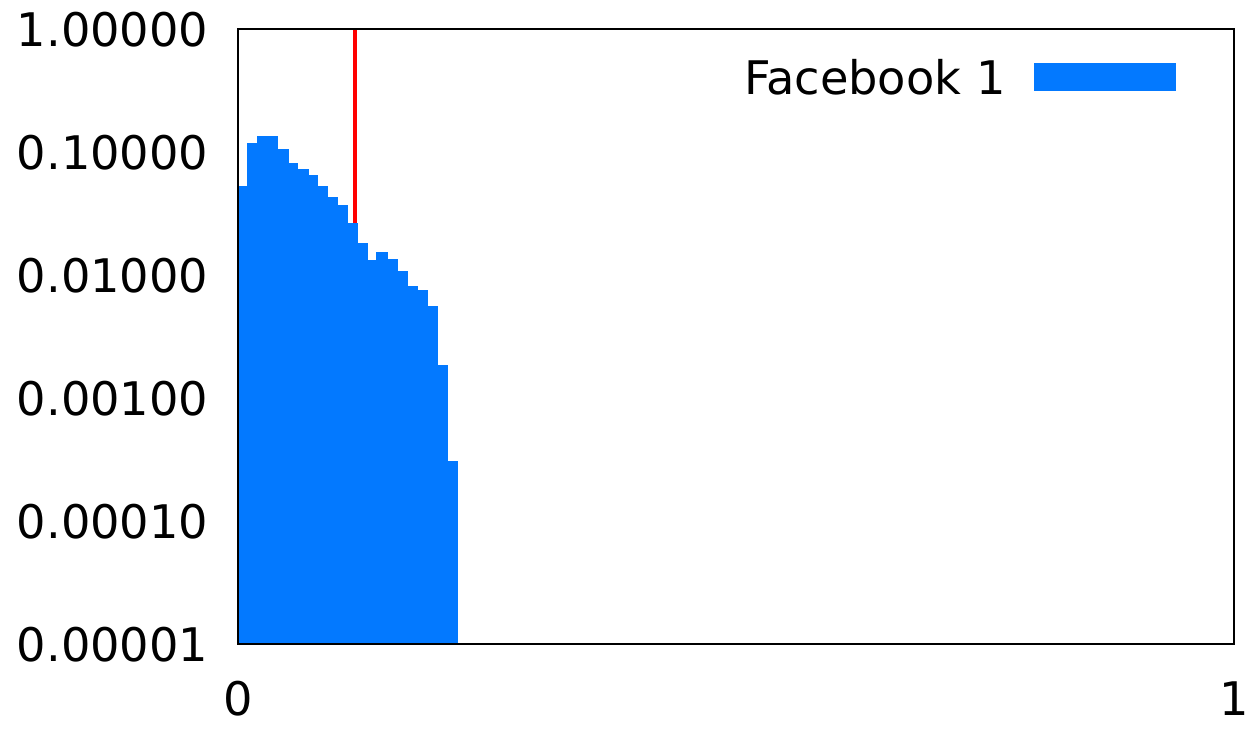}
    \hfill
    \includegraphics[width=0.24\textwidth]{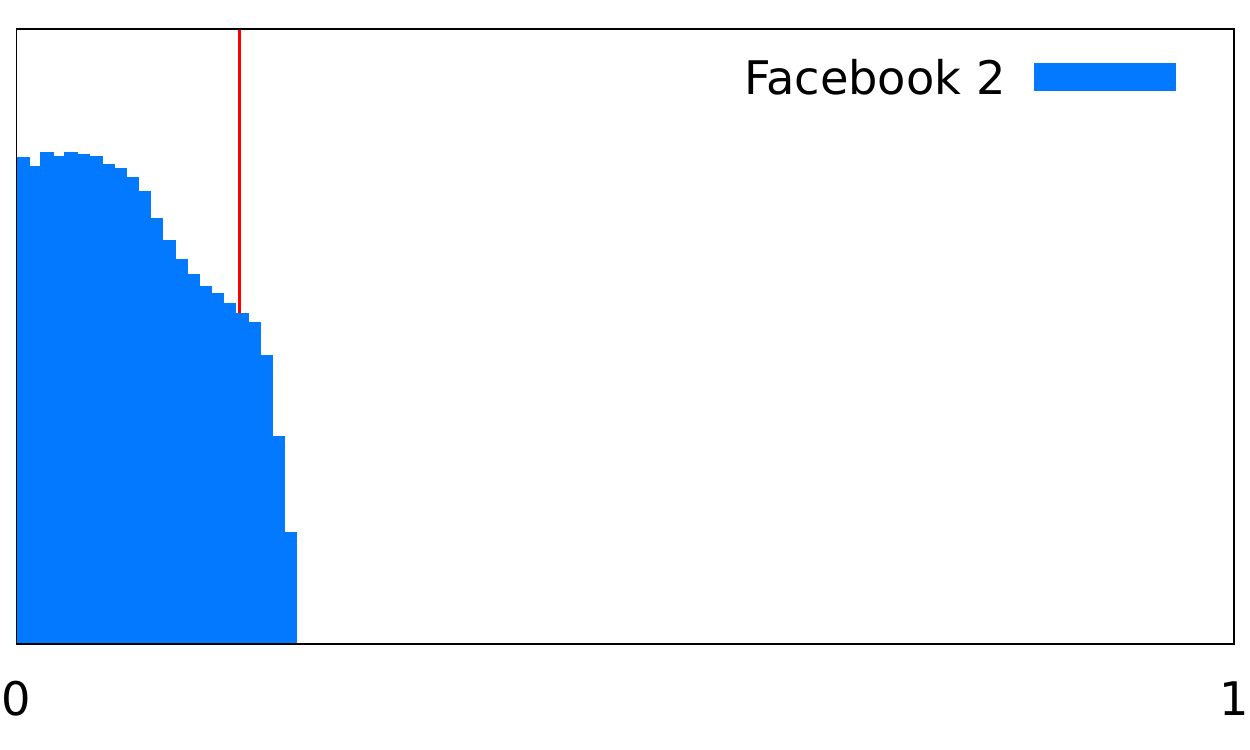}
    \hfill
    \includegraphics[width=0.24\textwidth]{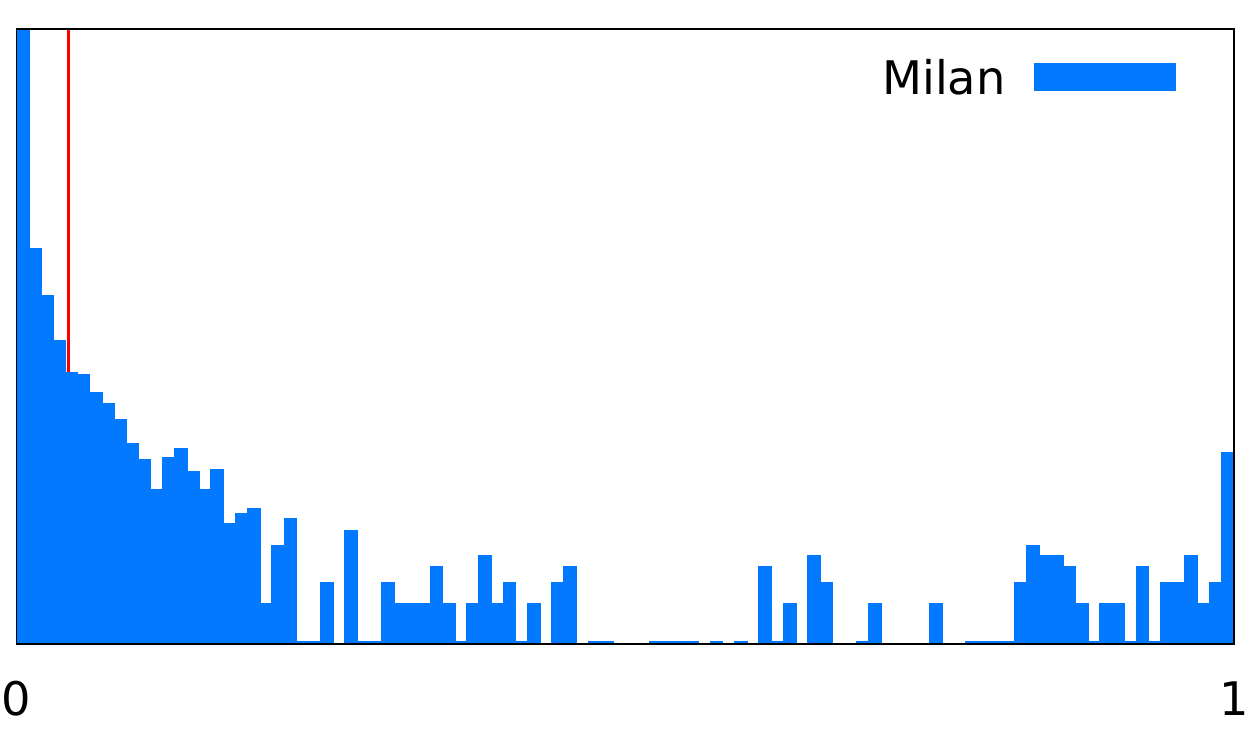}
    \hfill
    \includegraphics[width=0.24\textwidth]{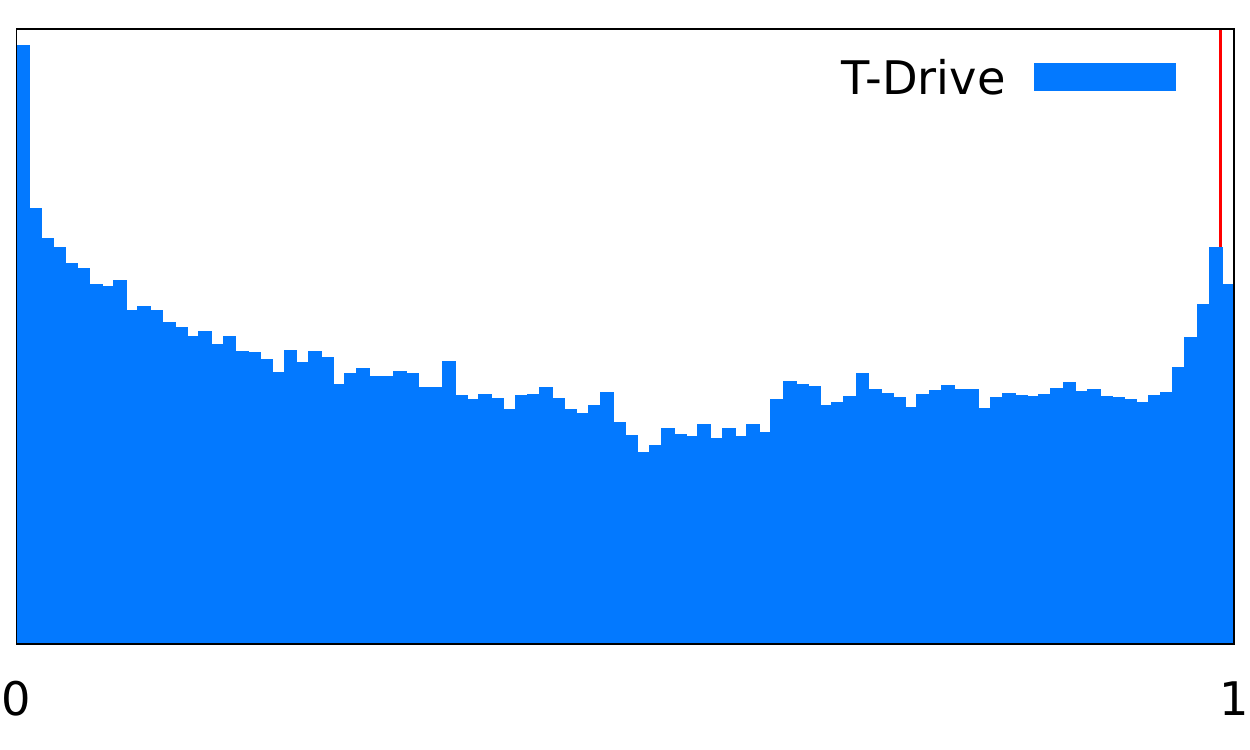}
    \caption{Distribution of similarities: The $x$-axis ranges from $0$ to $1$. The $y$-axis shows the fraction of input-query pairs that have this similarity on a logarithmic scale. The red line highlights the 10th percentile for \emph{Facebook1} and the 1st percentile for the other data sets. All the similarities of trajectories found by our queries lie to the right of the displayed red lines.}
    \label{fig:distances}
\end{figure*}

\medskip
\noindent\textbf{Q3: } 
In order to evaluate the quality of our query results, we use the \emph{similarity score ratio} (SSR) defined in~\cite{grossi2020finding}. The SSR of two sets $\mathcal{T}_1$ and $\mathcal{T}_2$ of trajectories with respect to a query is defined as $SSR(\mathcal{T}_1, \mathcal{T}_2, (Q,s))=\frac{\sum_{T_1\in \mathcal{T}_1}Sim(Q,T_1,s)}{\sum_{T_2\in \mathcal{T}_2}Sim(Q,T_2,s)}$.
We compare the results of the indices to the results of the exact algorithm \textsc{Exact}. \Cref{table:SSR_resultt} shows the average SSR values and the standard deviations over 100 queries.

First we observe that as expected (see section~\ref{sec:grossi}) the baseline \textsc{Gbase} \cite{grossi2020finding} has not always found the optimal solution set. The SSR score takes values below one for the \emph{Milan} and the \emph{T-Drive} data sets. However, for the optimal solution, the SSR value should be one.
With increasing value of $k$ the SSR value for our \textsc{Tree} algorithm decreases from 0.99 for $k=1$ to 0.70 for $k=64$.
However, for our \textsc{Pivot} approach the decrease is less strong; the SSR score is always above 0.91 for the instances
\emph{Facebook1}, \emph{Facebook2}, and \emph{T-Drive}. For the \emph{Milan} instance, our heuristics do not behave very well for large $k$.
Here, the SSR score for \textsc{Tree} takes a value of $0.7$ for $k=64$.
The reason for this low value is the small value of $r$ chosen in our experiments.
However, a larger value of $r$ will lead to even higher running time compared to the exact computations, which is already faster. 
This is because of the length of the \emph{Milan} trajectories are relatively small (see~\Cref{table:datasets_stats2}). %
For the \emph{Facebook2} instances the SSR score of \textsc{Pivot} is always 0.99.
The values of the \textsc{Gshq} and \textsc{Gshqt} heuristics for \emph{Facebook1}, \emph{Facebook2}, and \emph{T-Drive} are always above 0.93 due to the usage of the large candidate sets (see \Cref{table:candidate_sets_sizes}).
However, remember that their query times take are much longer than that for \textsc{Tree}, \textsc{Pivot}, and even our exact computations.

\medskip
\noindent\textbf{Q4: } By increasing the number of pivot elements $h$, a larger number of trajectories may be excluded from the candidate set, since every pivot adds an additional filter. However, each additional pivot might lead to a higher number of false negatives, i.e., trajectories that are not part of the candidate set but are part of the optimal top-$k$ set.
For the \emph{T-Drive} data set, we ran \textsc{Tree} and \textsc{Pivot} with $h\in\{64,128\}$ and $r\in\{0.1,0.4\}$. 
For building the index, \textsc{Tree} took $1.41$ seconds and \textsc{Pivot} took $0.85$ seconds.
\Cref{table:r_h_variation} shows the effect on query times and quality. We compare these results to \Cref{table:SSR_resultt} and \Cref{table:query_times} (there, the value of $r$ was chosen as $0.2$ and $0.25$, respectively).

Lowering $h$ and increasing $r$, each increases the size of the candidate set. A larger candidate set may lead to better SSR values; however, it also increases the running time. Notice, for $k=1$ we can achieve faster running times with high SSR value by choosing a small radius $r=0.1$ compared to the results in \Cref{table:SSR_resultt}. On the other hand, for $k=64$, \textsc{Pivot} improves its SSR value compared to \Cref{table:SSR_resultt} by choosing $h=128$ pivots and radius $r=0.4$, while being faster than \textsc{Exact}.

\medskip
\noindent\textbf{Q5: }
To evaluate the speedup gained by the upper bounding technique, we computed the similarity for $100$ queries for $k=2^i$ with $0\leq i \leq 6$. For each $k$, we computed the top-$k$ results without indexing, with and without the upper bounding. \Cref{fig:blupspeedup} shows the speedup that is achieved by using the upper bounding technique.
The \emph{T-Drive} and \emph{Milan} data sets profit immensely with speedups between over $4$ and $14$, and $2$ and $13$, respectively.
The speedups decrease with increasing $k$. The reason is that there are often only a few trajectories with very high similarity. If the algorithm finds these early on during the processing of the query and if the value of $k$ is small, then the upper bounding is most effective. For larger $k$, the lowest of the top-$k$ similarities is closer to the non-top-$k$ similarities, and upper bounding, i.e., stopping the computation early, happens less often.
There is no speedup in the case of the \emph{Facebook} data sets. 
The reason is that the differences in the similarities between the query and the trajectories are small (see \Cref{fig:distances}). 
Moreover, due to the long trajectories (see \Cref{table:datasets_stats2}), the upper bounds have to be updated often, such that the upper bounding in total cannot speed up the query.
\section{Conclusion}
We studied computing the top-$k$ most similar trajectories in a graph to a given query trajectory.
For this, we proposed a new spatio-temporal similarity measure based on the work of Grossi et al.~\cite{grossi2020finding}. 
We derived a distance function from our new similarity function, which satisfies a triangle inequality under certain conditions.
That built the basis for our pivot-based filtering technique, which accelerates finding exact solutions of top-$k$ trajectory queries.
Furthermore, we suggested a tree-based temporal filtering method in combination with the pivot-based technique.
Both approaches strongly outperform the baselines for all data sets, but the \emph{Milan} data set.
Here, our new baseline algorithm that uses the upper bounding technique has the lowest running time. It is also the first exact algorithm for the top-$k$ trajectory problem,
as we showed that the baseline in~\cite{grossi2020finding} does not always find the exact solution.
\section*{Acknowledgments}
This work is funded by the Deutsche Forschungsgemeinschaft (DFG, German Research Foundation) under Germany's Excellence Strategy -- EXC-2047/1 -- 390685813.
\balance
\bibliographystyle{abbrv}
\bibliography{literature}

\end{document}